\documentclass{article}

\usepackage{amsthm}
\usepackage{amssymb}
\usepackage{amsfonts}
\usepackage{times}
\usepackage{url}
\usepackage{booktabs}
\usepackage{longtable}
\usepackage{epic,eepic}
\usepackage{amsmath}

\newtheorem{lemma}{Lemma}
\newtheorem{theorem}{Theorem}

\newtheorem{proposition}{Proposition}

\newcommand{\2}{\vspace{2mm}}
\newcommand{\be}{\begin{enumerate}}
\newcommand{\ee}{\end{enumerate}}
\newcommand{\dom}{\mbox{$\rightarrow$}}

\begin{document}

\title{Fixed-Parameter Algorithms in Analysis of Heuristics for Extracting Networks in Linear Programs\footnote{A preliminary version of this paper will appear in the Proceedings of the 4th International Workshop on Parameterized and Exact Computation (IWPEC'09).}}
\author{
Gregory Gutin\footnote{Department of Computer Science,
Royal Holloway, University of London, Egham, Surrey TW20 0EX,
England, UK, \texttt{gutin@cs.rhul.ac.uk}}\and Daniel Karapetyan\footnote{Department of Computer Science,
Royal Holloway, University of London, Egham, Surrey TW20 0EX,
England, UK, \texttt{daniel.karapetyan@gmail.com}} \and Igor Razgon\footnote{Department of Computer Science, University College Cork, Ireland, \texttt{i.razgon@cs.ucc.ie}}}

\maketitle
\begin{abstract}
We consider the problem of extracting a maximum-size reflected network in a linear program.  This problem has been studied before and a state-of-the-art SGA heuristic with two variations have been proposed.

In this paper we apply a new approach to evaluate the quality of SGA\@.  In particular, we solve majority of the instances in the testbed to optimality using a new fixed-parameter algorithm, i.e., an algorithm whose runtime is polynomial in the input size but exponential in terms of an additional parameter associated with the given problem.

This analysis allows us to conclude that the the existing SGA heuristic, in fact, produces solutions of a very high quality and often reaches the optimal objective values.  However, SGA contain two components which leave some space for improvement: building of a spanning tree and searching for an independent set in a graph.  In the hope of obtaining even better heuristic, we tried to replace both of these components with some equivalent algorithms.

We tried to use a fixed-parameter algorithm instead of a greedy one for searching of an independent set.  But even the exact solution of this subproblem improved the whole heuristic insignificantly.  Hence, the crucial part of SGA is building of a spanning tree.  We tried three different algorithms, and it appears that the Depth-First search is clearly superior to the other ones in building of the spanning tree for SGA.

Thereby, by application of fixed-parameter algorithms, we managed to check that the existing SGA heuristic is of a high quality and selected the component which required an improvement.  This allowed us to intensify the research in a proper direction which yielded a superior variation of SGA\@.  This variation significantly improves the results of the basic SGA solving most of the instances in our experiments to optimality in a short time.
\end{abstract}

\section{Introduction, terminology and notation}
Large-scale LP models which arise in applications usually have sparse coefficient matrices with special structure. If a special structure can be recognized, it can often be used to considerably speed up the process of solving the LP problem and/or to help in understanding the nature of the LP model. A well-known family of such special structures is networks; a number of
heuristics to extract (reflected) networks in LP problems have been developed and analyzed, see, e.g., \cite{BM93,BF88,Bixby,BW84,GGMM,gulpinar,Hsu} (a formal definition of a reflected network is given below). From the computational point of view, it is worthwhile extracting a reflected network only if the LP problem under consideration contains a relatively large reflected network.

We consider an LP problem in the standard form stated as
\begin{center}
Minimize $\{p^{T}x; \mbox{ subject to } Ax=b,\; x\ge 0\}.$
\end{center}
LP problems have a number of equivalent, in a sense, forms that can be obtained from each other by various operations. Often scaling operations, that is multiplications of rows and columns of the
matrix $A$ of constraints by non-zero constants, are applied, see, e.g., \cite{BM93,BF88,BW84,GGMM}. In the sequel unless stated otherwise, we assume that certain scaling operations on $A$ have been carried out and will not be applied again apart from row reflections defined below.
A matrix $B$ is a {\em network (matrix)} if $B$ is a ($0,\pm 1$)-{\em matrix} (that is, entries of $B$ belong to the set $\{1,0,-1\}$) and every column of $B$ has at most one entry equal to 1 and at
most one entry equal to $-1.$ The operation of {\em reflection} of a row of a matrix $B$ changes the signs of all non-zero entries of this row. A matrix $B$ is a {\em reflected network (matrix)} if
there is a sequence of row reflections that transforms $B$ into a network matrix. The {\em problem of detecting a maximum embedded reflected network} (DMERN) is to find the maximum number of rows that
form a submatrix $B$ of $A$ such that $B$ is a reflected network. This number is denoted by $\nu(A)$. The DMERN problem is known to be NP-hard \cite{Bartholdi}.

G\"{u}lp\i nar et al.\ \cite{gulpinar} showed that the maximum size of an embedded reflected network equals the maximum order of a balanced induced subgraph of a special signed graph associated with matrix $A$ (for details, see Section \ref{secENSG}). This result led G\"{u}lp\i nar et al.\ \cite{gulpinar} to a heuristic named SGA for detection of reflected networks. Computational experiments in \cite{gulpinar} with SGA and three other heuristics demonstrated that SGA and another heuristic, RSD, were of very similar quality and clearly outperformed the two other heuristics in this respect. However, SGA was about 20 times faster, on average, than RSD\@.  Moreover, SGA has an important theoretical property that RSD does not have: SGA always solves the DMERN problem to optimality when the whole matrix $A$ is a reflected network \cite{gulpinar}. Since SGA appeared to be the best choice for a heuristic for detection of reflected networks, Gutin and Zverovitch \cite{GZ} investigated `repetition' versions of SGA and found out that three times repetition of SGA (SGA3) gives about 1\% improvement, while 80 times repetition of SGA (SGA80) leads to 2\% improvement.

In this paper we propose a more refined analysis of the SGA heuristic.  By using a fixed-parameter algorithm, we managed to find the optimal solutions for majority of the instances in the testbed.  This helped us to understand that SGA, in fact, obtains solutions of very high quality and sometimes even solves the instances to optimality.  This means that even a small improvement of SGA quality is a significant achievement.  Having this result, we tried to improve SGA.

SGA contains two components which leave some space for improvement.  One is an independent set searching algorithm.  In the original version, a greedy algorithm was used for this purpose.  We replaced it with a fixed-parameter algorithm.  Here we used the well-known fact that the complement of an independent set in a graph is a vertex cover.  However, our experiments have shown that even the greedy algorithm usually finds the optimal or almost optimal solutions and, thus, this modification of SGA appears to be of little practical interest.  This demonstrated that the independent set extracting heuristic need not be replaced by a more powerful heuristic or exact algorithm.

Hence, the crucial component in SGA is the second component, i.e., building of a spanning tree.  In the original version of SGA (\cite{gulpinar,GZ}) a random procedure was used for this purpose.  We tried to replace it with Breadth-First Search (BFS) and Depth-First search (DFS) algorithms.  The experiments show that the final solution quality significantly depends on spanning tree and that using DFS is clearly superior to both BFS and Random Search (RS).

Moreover, we observe that the choice of the algorithm for spanning tree computation does not essentially influence the runtime of the heuristic under consideration.  Thus we propose three new heuristics SGA(DFS), SGA3(DFS), and SGA80(DFS) that take about the same time as their respective counterparts SGA, SGA3, and SGA80 but are more precise.

In order to evaluate the quality of the considered heuristics we compare their outputs with \emph{optimal} solutions of the considered instances.  To solve the instances to optimality, we design a \emph{fixed-parameter algorithm} for the DMERN problem.  Such algorithm is polynomial in the input size but exponential in terms of an additional parameter associated with the given problem.  Problems that can be solved by fixed-parameter algorithms are called fixed-parameter tractable (FPT)\@.  A fixed-parameter algorithm is usually applied when the parameter is small. In this situation the exponential component of the runtime becomes a relatively small multiplicative or additive constant, that is the problem is solved by a polynomial (usually even a low polynomial) algorithm.  Nevertheless, even if the parameter is small, the researchers often prefer to use imprecise heuristic methods simply because they are faster. We argue that in this situation a fixed-parameter algorithm may be still of a considerable use because it can help to evaluate the quality of the considered heuristics. In particular, in our experiments the use of a fixed-parameter algorithm allowed to observe that the heuristic SGA80(DFS) almost always returns an optimal solution.  That is, although the heuristic is the slowest among the considered ones, this is still not the reason to discard it: the heuristic can be the best choice when the quality is particularly crucial.  Thus we represent a novel way of application of fixed-parameter algorithms which, we believe, would be a significant contribution to the applied research related to fixed-parameter computation.

To design a fixed-parameter algorithm for the DMERN problem we in fact design a fixed-parameter algorithm for the maximum balanced subgraph problem using the equivalence result from \cite{gulpinar}. The fixed-parameter algorithm for the latter problem is designed by showing its equivalence to the bipartization problem and then using a fixed-parameter algorithm for the bipartization problem first proposed in   \cite{RSV04} and then refined in \cite{Huff2005}.

The rest of the paper is organized as follows. In Section \ref{secENSG} we introduce necessary notation, Section \ref{secSGA} presents the SGA heuristic and its
variants, and Section \ref{secFPA} introduces the fixed-parameter algorithms.
Section \ref{secEA} describes a fixed-parameter algorithm for the maximum balanced subgraph problem. In Section \ref{secEE} we report empirical results and
analyze them.  Concluding remarks are made in Section \ref{sec:C}.

\section{Embedded networks and signed graphs}\label{secENSG}

In this section, we assume, for simplicity, that $A$ is a ($0,\pm 1$)-matrix itself (since all rows containing entries not from the set $\{-1,0,+1\}$ cannot be part of a reflected network).
Here we allow graphs to have parallel edges, but no loops. A graph $G=(V,E)$ along with a function $s:\ E\dom \{-,+\}$ is called a {\em signed graph}. Signed graphs have been studied by many researchers, see, e.g., \cite{Ha78,HK80,HT56,Za82}.

We assume that signed graphs have no parallel edges of the same sign, but may have parallel edges of opposite signs. An edge is {\em positive} ({\em negative}) if it is assigned plus (minus).
For a ($0,\pm 1$)-matrix  $A=[a_{ik}]$ with $n$ rows, we construct a signed graph $G(A)$ as follows: the vertex set of $G(A)$ is $\{1,2,\ldots ,n\}$;  $G(A)$ has a positive (negative) edge $ij$ if and only if $a_{ik}=-a_{jk}\neq 0$ ($a_{ik}=a_{jk}\neq 0$) for some $k$. Let $G=(V,E,s)$
be a signed graph. For a non-empty subset $W$ of $V$, the {\em $W$-switch} of $G$ is the signed graph $G^W$ obtained from $G$ by
changing the signs of the edges between $W$ and $V(G)\setminus W$. A signed graph $G=(V,E,s)$ is {\em balanced} if there exists a subset $W$ of $V$ ($W$ may coincide with $V$) such that $G^W$ has no negative edges. Let $\eta(G)$ be the largest order of a balanced
induced subgraph of $G$.

The following important result was proved in \cite{gulpinar}. This result allows us to search for a largest balanced
induced subgraph of $G(A)$ instead of a largest reflected network in $A$.

\begin{theorem}\cite{gulpinar}
\label{mawglem} Let $A$ be a $(0,\pm 1)$-matrix. A set $R$ of rows in $A$ forms a reflected network if and only if the vertices of $G(A)$ corresponding to $R$ induce a balanced subgraph of $G(A)$.
In particular, $\nu(A)=\eta(G(A)).$
\end{theorem}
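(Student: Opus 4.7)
The plan is to show that the allowed operations on the two sides of the biconditional line up: reflecting a single row of $A$ corresponds precisely to switching a single vertex of $G(A)$. Concretely, if we reflect row $i$, then for any column $k$ in which both row $i$ and a row $j\neq i$ carry a nonzero entry the sign test $a_{ik}=-a_{jk}$ flips, so every edge $ij$ incident to vertex $i$ changes sign, while edges between two rows other than $i$ are unaffected. Extending this to a set $W$ of rows, an edge inside $W$ gets flipped twice (once per endpoint) and so is unchanged, an edge inside $V(G(A))\setminus W$ is never touched, and an edge between $W$ and $V(G(A))\setminus W$ is flipped exactly once. Hence reflecting the rows indexed by $W$ realises precisely the $W$-switch of $G(A)$.

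Using this correspondence the forward direction is essentially bookkeeping. Suppose $R$ forms a reflected network and let $W\subseteq R$ be a set of rows whose reflection turns the submatrix $B$ on the rows of $R$ into a network matrix. For any edge $ij$ of the induced signed subgraph $G(A)[R]$ there is a column $k$ with both $a_{ik}$ and $a_{jk}$ nonzero; in the reflected submatrix that column has at most one $+1$ and at most one $-1$, so these two entries must be $+1$ and $-1$, and this is exactly the condition that $ij$ is positive in the $W$-switch of $G(A)[R]$. Thus $G(A)[R]$ is balanced.

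The backward direction requires one extra ingredient, which I expect to be the main obstacle. Assume $G(A)[R]$ is balanced and choose $W\subseteq R$ with $(G(A)[R])^W$ having only positive edges; I want to conclude that reflecting the rows of $W$ transforms $B$ into a network matrix. The switch-reflection correspondence already handles columns with exactly two nonzero entries in $R$: if those entries lie in rows $i,j$, then the now-positive edge $ij$ forces the reflected entries to be one $+1$ and one $-1$, as required by the network condition. What remains is to rule out columns of $B$ with three or more nonzero entries in $R$. For this I would prove a short auxiliary claim: if rows $i,j,l$ of $R$ all have nonzero entries in a common column $k$, then the triangle $ijl$ in $G(A)[R]$ carries an odd number of negative edges, verified by a direct case check over the eight possible sign patterns of $(a_{ik},a_{jk},a_{lk})\in\{\pm1\}^3$. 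Such an odd-signed triangle is unbalanced under every switching, contradicting the choice of $W$. Hence every column of $B$ contains at most two nonzero entries in the rows of $R$, and the switching argument then produces the desired network matrix. The equality $\nu(A)=\eta(G(A))$ follows at once by taking $R$ that realises the maximum on either side.
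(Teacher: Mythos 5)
Your argument is correct, but note that the paper itself offers no proof of this theorem: it is imported verbatim from G\"ulp\i nar et al.\ \cite{gulpinar}, so there is nothing internal to compare against. Judged on its own, your proof is sound and essentially the standard one. The central observation --- that reflecting the rows indexed by $W$ acts on $G(A)$ exactly as the $W$-switch (crossing edges flip sign, internal and external edges do not) --- is verified correctly, and it immediately gives the forward direction and the two-nonzeros case of the backward direction. The one genuinely non-trivial point, columns with three or more nonzero entries among the rows of $R$, is handled properly: among three entries in $\{\pm 1\}$ either all three agree (three negative edges) or exactly one pair agrees (one negative edge), so the triangle of edges witnessed by that column always has an odd number of negative edges; since switching a set $W$ flips an even number of edges on any cycle, the parity of negative edges on a cycle is a switching invariant, so such a triangle can never be made all-positive. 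This is exactly the characterization of balance (``no cycle with an odd number of negative edges'') that the paper itself quotes as Theorem~2.8 of \cite{gulpinar} in two other proofs, so you may as well cite it rather than reprove it. Two cosmetic cautions: because $G(A)$ may carry parallel edges of opposite signs, the ``triangle'' should be taken to consist of the three edges arising from the \emph{same} column $k$; and in the forward direction one should remark that a sequence of row reflections may be assumed without loss of generality to be a single reflection of a set $W\subseteq R$, since reflections commute and are involutions. Neither affects the validity of the argument.
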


\section{SGA and its Variations}\label{secSGA}

The heuristic SGA introduced in \cite{gulpinar} is based on the following:

\begin{lemma}\cite{gulpinar}
\label{l} Every signed tree $T$ is a balanced graph.
\end{lemma}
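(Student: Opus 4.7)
The plan is to show directly that every signed tree admits a switching set $W \subseteq V(T)$ making every edge positive, which is exactly the definition of balance. I will phrase this as constructing a 2-coloring of the vertices with a compatibility condition that positive edges are monochromatic and negative edges are bichromatic.

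First I would reformulate the goal. Observe that under a $W$-switch, an edge $uv$ keeps its sign precisely when both endpoints lie on the same side of the partition $(W, V \setminus W)$, and flips sign precisely when they lie on opposite sides. Hence $G^W$ has no negative edges if and only if every positive edge of $G$ is monochromatic (both endpoints in $W$ or both outside) and every negative edge of $G$ is bichromatic. So the task reduces to producing such a partition of the vertices of $T$.

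Next I would carry out the construction. Pick any vertex $r$ of $T$ as a root and place $r \in W$. Traverse $T$ away from $r$ (by BFS or DFS). For each newly encountered vertex $v$ reached along the unique tree edge $uv$ from its already-colored parent $u$, assign $v$ to the same side as $u$ if $uv$ is positive, and to the opposite side if $uv$ is negative. Because $T$ is a tree, every non-root vertex has a unique parent, so this rule assigns a side to each vertex exactly once, and by construction every edge of $T$ satisfies the compatibility condition above. The resulting set $W$ then witnesses balance of $T$.

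The only thing that could go wrong in a general signed graph would be a consistency conflict: two different paths from $r$ to some vertex $v$ forcing conflicting side assignments. This is the potential obstacle, but it cannot arise here because $T$ is acyclic and the path from $r$ to any vertex is unique. Thus the step that would be delicate in the cyclic case is trivialized by the tree structure, and the proof is complete.
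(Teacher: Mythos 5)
Your proof is correct and is essentially the same argument as the paper's: the paper constructs $W$ by induction on leaf removal (adding or not adding the leaf $x$ to $W$ according to the sign of its incident edge in $T^W$), which is exactly your rooted traversal read in reverse, and both hinge on the same observation that a tree forces a unique, conflict-free side assignment for each vertex. No gap.
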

\begin{proof} We prove the lemma by induction on the number of edges in $T$.
The lemma is true when the number of edges is one. Let $x$ be a
vertex of $T$ of degree one. By the induction hypothesis, there
is a set $W\subseteq V(T)-x$ such that $(T-x)^W$ has no negative edges. In $T^W$
the edge $e$ incident to $x$ is positive or negative. In the
first case, let $W'=W$ and the second case, let $W'=W\cup \{x\}$.
Then, $T^{W'}$ has no negative edges. \end{proof}

\begin{center}
\fbox{
\parbox{0.9\linewidth}{

{\bf Heuristic SGA:}

{\it Step 1:} Construct signed graph $G=G(A)=(V,E,s)$.

{\it Step 2:} Find a spanning forest $T$ in $G$.

{\it Step 3:} Using a recursive algorithm based on the proof of
Lemma \ref{l}, compute \quad \quad \quad \quad $W\subseteq V$
such that $T^W$ has no negative edges.

{\it Step 4:} Let  $N$ be the subgraph of $G^W$ induced by the negative edges. Apply the following greedy-degree algorithm \cite{Pa92} to find a maximal independent set
$I$ in $N$: starting from empty $I$, append to $I$ a vertex of $N$ of minimum
degree, delete this vertex together with its neighbors from $N$,
and repeat the above procedure till $N$ has no vertex.

{\it Step 5:} Output $I.$
}}
\end{center}

\2

For a graph $H$ with connectivity components $H_1,\ldots ,H_p$, 
a {\em spanning forest} is the union of spanning trees $T_1,\ldots ,T_p$ of $H_1,\ldots ,H_p$, respectively.
The second step of the algorithm (i.e., finding of spanning forest) can be implemented in a number of different ways. We tried the following ones:
\begin{enumerate}
\item \textit{Random Search (RS)} starts from marking some vertex.  On every iteration, it finds some edge (positive or negative; double edges are not considered) between a marked vertex $v$ and an unmarked vertex $u$.  It marks $u$ and includes or excludes it from the set $W$ according to the sign of the edge $uv$.  If no edges between marked and unmarked vertices are found, the algorithm marks some random vertex again.
    This was the method of computing a spanning forest used for the experiments reported in \cite{gulpinar}.

\item \textit{BFS} is a well known algorithm for constructing of spanning forests.  It maintains a FIFO queue.  On every iteration, it takes a vertex from this queue, marks all its unmarked neighbors into the queue and adds these neighbors to the queue.  If the queue is empty, some unmarked vertex is marked and added into the queue.  In our implementation, we take a vertex of maximum degree in this case.

\item \textit{DFS} is another well known algorithm for constructing of spanning forests.  It starts from some vertex and then calls itself recursively for every of its neighbors which still are not included in the spanning forest.
\end{enumerate}

\begin{figure}
\begin{center}

 \setlength{\unitlength}{0.00035in}
\begingroup\makeatletter\ifx\SetFigFont\undefined%
\gdef\SetFigFont#1#2#3#4#5{%
  \reset@font\fontsize{#1}{#2pt}%
  \fontfamily{#3}\fontseries{#4}\fontshape{#5}%
  \selectfont}%
\fi\endgroup%
{\renewcommand{\dashlinestretch}{30}
\begin{picture}(12242,2175)(0,-10)
\drawline(150.000,675.000)(219.982,640.385)(291.952,610.119)
    (365.639,584.314)(440.767,563.068)(517.055,546.460)
    (594.216,534.553)(671.962,527.390)(750.000,525.000)
    (828.038,527.390)(905.784,534.553)(982.945,546.460)
    (1059.233,563.068)(1134.361,584.314)(1208.048,610.119)
    (1280.018,640.385)(1350.000,675.000)
\drawline(9450.000,675.000)(9380.018,709.615)(9308.048,739.881)
    (9234.361,765.686)(9159.233,786.932)(9082.945,803.540)
    (9005.784,815.447)(8928.038,822.610)(8850.000,825.000)
    (8771.962,822.610)(8694.216,815.447)(8617.055,803.540)
    (8540.767,786.932)(8465.639,765.686)(8391.952,739.881)
    (8319.982,709.615)(8250.000,675.000)
\drawline(8250.000,675.000)(8319.982,640.385)(8391.952,610.119)
    (8465.639,584.314)(8540.767,563.068)(8617.055,546.460)
    (8694.216,534.553)(8771.962,527.390)(8850.000,525.000)
    (8928.038,527.390)(9005.784,534.553)(9082.945,546.460)
    (9159.233,563.068)(9234.361,584.314)(9308.048,610.119)
    (9380.018,640.385)(9450.000,675.000)
\put(150,1875){\circle{150}} \put(1350,1875){\circle{150}}
\put(150,675){\circle{150}} \put(1350,675){\circle{150}}
\put(4050,675){\circle{150}} \put(5550,1875){\circle{150}}
\put(5550,675){\circle{150}} \put(6750,1875){\circle{150}}
\put(6750,675){\circle{150}} \put(8250,1875){\circle{150}}
\put(8250,675){\circle{150}} \put(9450,1875){\circle{150}}
\put(9450,675){\circle{150}} \put(10950,1875){\circle{150}}
\put(12150,1875){\circle{150}} \put(10950,675){\circle{150}}
\put(12150,675){\circle{150}} \put(2850,675){\circle{150}}
\put(2850,1875){\circle{150}} \put(4050,1875){\circle{150}}
\drawline(150,1875)(1350,1875) \drawline(1350,1875)(1350,675)
\drawline(150,1875)(150,675) \drawline(150,1875)(1350,675)
\drawline(2850,675)(2850,1875) \drawline(2850,1875)(4050,1875)
\drawline(4050,1875)(4050,675) \drawline(5550,675)(5550,1875)
\drawline(5550,1875)(6750,1875) \drawline(6750,1875)(6750,675)
\drawline(8250,675)(8250,1875) \drawline(8250,1875)(9450,1875)
\drawline(9450,1875)(9450,675) \drawline(9450,675)(8250,1875)
\drawline(10950,675)(12150,675) \drawline(12150,675)(10950,1875)
\put(750,375){\makebox(0,0)[lb]{\smash{{{$-$}}}}}
\drawline(1350.000,675.000)(1280.018,709.615)(1208.048,739.881)
    (1134.361,765.686)(1059.233,786.932)(982.945,803.540)
    (905.784,815.447)(828.038,822.610)(750.000,825.000)
    (671.962,822.610)(594.216,815.447)(517.055,803.540)
    (440.767,786.932)(365.639,765.686)(291.952,739.881)
    (219.982,709.615)(150.000,675.000)
\put(750,900){\makebox(0,0)[lb]{\smash{{{$+$}}}}}
\put(11550,0){\makebox(0,0)[lb]{\smash{{{$M$}}}}}
\put(-80,1275){\makebox(0,0)[lb]{\smash{{{$+$}}}}}
\put(750,1950){\makebox(0,0)[lb]{\smash{{{$-$}}}}}
\put(1425,1275){\makebox(0,0)[lb]{\smash{{{$-$}}}}}
\put(750,1350){\makebox(0,0)[lb]{\smash{{{$-$}}}}}
\put(2620,1275){\makebox(0,0)[lb]{\smash{{{$+$}}}}}
\put(3450,1950){\makebox(0,0)[lb]{\smash{{{$-$}}}}}
\put(4100,1275){\makebox(0,0)[lb]{\smash{{{$-$}}}}}
\put(5300,1275){\makebox(0,0)[lb]{\smash{{{$+$}}}}}
\put(6150,1950){\makebox(0,0)[lb]{\smash{{{$+$}}}}}
\put(6825,1275){\makebox(0,0)[lb]{\smash{{{$+$}}}}}
\put(8850,1950){\makebox(0,0)[lb]{\smash{{{$+$}}}}}
\put(8000,1275){\makebox(0,0)[lb]{\smash{{{$+$}}}}}
\put(8925,1275){\makebox(0,0)[lb]{\smash{{{$-$}}}}}
\put(9525,1275){\makebox(0,0)[lb]{\smash{{{$+$}}}}}
\put(8850,375){\makebox(0,0)[lb]{\smash{{{$-$}}}}}
\put(8850,900){\makebox(0,0)[lb]{\smash{{{$+$}}}}}
\put(150,2025){\makebox(0,0)[lb]{\smash{{{$1$}}}}}
\put(1350,2025){\makebox(0,0)[lb]{\smash{{{$2$}}}}}
\put(150,375){\makebox(0,0)[lb]{\smash{{{$3$}}}}}
\put(1350,375){\makebox(0,0)[lb]{\smash{{{$4$}}}}}
\put(2850,2025){\makebox(0,0)[lb]{\smash{{{$1$}}}}}
\put(4050,2025){\makebox(0,0)[lb]{\smash{{{$2$}}}}}
\put(2850,375){\makebox(0,0)[lb]{\smash{{{$3$}}}}}
\put(4050,375){\makebox(0,0)[lb]{\smash{{{$4$}}}}}
\put(5550,2025){\makebox(0,0)[lb]{\smash{{{$1$}}}}}
\put(6750,2025){\makebox(0,0)[lb]{\smash{{{$2$}}}}}
\put(5550,375){\makebox(0,0)[lb]{\smash{{{$3$}}}}}
\put(6750,375){\makebox(0,0)[lb]{\smash{{{$4$}}}}}
\put(8250,2025){\makebox(0,0)[lb]{\smash{{{$1$}}}}}
\put(9450,2025){\makebox(0,0)[lb]{\smash{{{$2$}}}}}
\put(8250,375){\makebox(0,0)[lb]{\smash{{{$3$}}}}}
\put(9450,375){\makebox(0,0)[lb]{\smash{{{$4$}}}}}
\put(10950,2025){\makebox(0,0)[lb]{\smash{{{$1$}}}}}
\put(12150,2025){\makebox(0,0)[lb]{\smash{{{$2$}}}}}
\put(10950,375){\makebox(0,0)[lb]{\smash{{{$3$}}}}}
\put(12150,375){\makebox(0,0)[lb]{\smash{{{$4$}}}}}
\put(750,0){\makebox(0,0)[lb]{\smash{{{$G$}}}}}
\put(3450,0){\makebox(0,0)[lb]{\smash{{{$T$}}}}}
\put(6150,0){\makebox(0,0)[lb]{\smash{{{$T^{\{2\}}$}}}}}
\put(8850,0){\makebox(0,0)[lb]{\smash{{{$G^{\{2\}}$}}}}}
\end{picture}
}

\caption{Illustration for SGA; $M$ is the subgraph $G^{\{2\}}$ induced by the negative edges of $G^{\{2\}}$.}\label{sgfig}
\end{center}
\end{figure}

\begin{proposition}\cite{gulpinar}
If $G$ is balanced, then $I=V.$
\end{proposition}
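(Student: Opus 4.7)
The plan is to show that when $G$ is balanced, the signed graph $G^W$ produced after Step 3 has no negative edges at all; then the graph $N$ in Step 4 is edgeless on vertex set $V$, so the greedy-degree algorithm trivially returns $I = V$.

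First I would recall the standard cycle characterization of balance: a signed graph is balanced if and only if every cycle contains an even number of negative edges, and this parity on each cycle is invariant under $W$-switching (since switching toggles the signs of an even number of edges on any cycle). I would note as an immediate consequence that a balanced signed graph cannot contain a pair of parallel edges of opposite signs, as such a pair forms a $2$-cycle with exactly one negative edge.

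Next I would use Step 3 as follows. Since $T \subseteq G$ and $W$-switching on $G$ agrees with $W$-switching on $T$ on every shared edge, the fact that $T^W$ has no negative edges means that \emph{every edge of $T$ is positive in $G^W$}. Now take any edge $uv$ of $G$. Its endpoints lie in the same connected component of $G$, hence in the same tree of the spanning forest $T$, so there is a unique $u$-$v$ path $P$ in $T$. The path $P$ together with $uv$ forms a cycle $C$ in $G$. Applying balance to $G^W$ (balance is preserved by switching), $C$ has an even number of negative edges in $G^W$; but every edge of $P$ is positive in $G^W$ by the previous observation, so $uv$ itself must be positive in $G^W$.

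Thus $G^W$ has no negative edges, the subgraph $N$ induced by negative edges has empty edge set (with vertex set $V$), and Step 4 greedily adds every vertex to $I$, giving $I = V$. The only non-routine ingredient is the cycle characterization of balance together with its switching-invariance; once that is in hand, the rest is a one-line spanning-forest argument, so I do not expect any real obstacle.
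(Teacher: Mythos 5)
Your proof is correct and follows essentially the same route as the paper's: both use the cycle characterization of balance and the observation that all edges of the spanning forest are positive in $G^W$, so any remaining negative edge of $G^W$ would be the unique negative edge on the cycle it forms with the tree path, contradicting balance. Your write-up is just slightly more explicit about switching-invariance and the parallel-edge case.
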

\begin{proof}
It is well-known (see, e.g., Theorem 2.8 in \cite{gulpinar}) that a signed graph is balanced if and only if it does not contain cycles
with odd number of negative edges. Let $T$ be a a spanning forest in $G$. Since $T^W$ has no negative edges, $G^W$ cannot have negative edges. Indeed, if $xy$ was a negative edge in $G^W,$ it would be the unique negative edge in a cycle formed by $xy$ and the $(x,y)$-path of $T^W,$ a contradiction.
\end{proof}

Gutin and Zverovitch \cite{GZ} investigated a repetition version of SGA where Steps 2-4 were repeated several time (each time the vertices of $G$ were pseudo-randomly permuted and a new spanning forest of $G$ was built). They found out that three times repetition of SGA gives about 1\% improvement, while 80 times repetition of SGA leads to 2\% improvement, on average.
In our experiments we used a larger test bed and better scaling procedure than in \cite{GZ} and, thus, we run SGA and its 3 and 80 times repetitions on the new set of instances of the DMERN problem (see Section \ref{secEE}). We will denote these repetition versions of SGA by SGA3 and SGA80, respectively.

In Section \ref{secEE} we also report results on another modification of SGA, SGA+VC, where we replace Step 4 with finding a vertex cover $C$ of $G^W$ and setting $I=V(G^W)\setminus C.$
Since the vertex cover problem is well studied in the area of parameterized
complexity \cite{AFLS,chen2005,niedermeier2006}, to find $C$ we can use a fixed-parameter algorithm for the problem.

\section{Fixed-Parameter Algorithmics}\label{secFPA}

We recall some most basic notions of fixed-parameter algorithmics (FPA) here, for a
more in-depth treatment of the topic we refer the reader to the monographs
\cite{downey1999,flum2006,niedermeier2006}.

FPA is a relatively new approach for dealing with intractable
computational problems. In the framework of FPA we introduce a
parameter $k$, which is often a positive integer (but may be a
vector, graph, or any other object for some problems) such that the
problem at hand can be solved in time $O(f(k)n^c)$, where $n$ is the
size of the problem instance, $c$ is a constant not dependent on $n$
or $k$, and $f(k)$ is an arbitrary computable function not dependent
on $n$.  The ultimate goal is to obtain $f(k)$ and $c$ such that for
small or even moderate values of $k$ the problem under consideration
can be completely solved in a reasonable amount of time.

As an example, consider the {\em Vertex Cover problem (VC)}: given
an undirected graph $G$ (with $n$ vertices and $m$ edges), find a
minimum number of vertices such that every edge is incident to at
least one of these vertices. In the (naturally) parameterized
version of VC, $k$-VC, given a graph $G$, we are to check whether
$G$ has a vertex cover with at most $k$ vertices. $k$-VC admits an
algorithm of running time $O(1.2738^k+kn)$ obtained in \cite{chen2005} that allows us to solve VC with $k$ up to several
hundreds. Without using FPA, we would be likely to end up with the
obvious algorithm of complexity $O(mn^k)$. The last algorithm is far
too slow even for small values of $k$ such as $k=10$.

Parameterized problems that admit algorithms of complexity
$O(f(k)n^c)$ (we refer to such algorithms as {\em fixed-parameter})
are called {\em fixed-parameter tractable (FPT)}\@.  Notice that not every parameterized problem is FPT, but there are
many problems that are FPT\@ \cite{downey1999,flum2006,niedermeier2006}.

\section{Minimum Balanced Deletion problem}\label{secEA}

By our discussions above, we are interested in the following parameterized problem.

\begin{center}
\fbox{
\parbox{0.9\linewidth}{
  \noindent{\bfseries  The minimum balanced deletion problem (MBD)}\\
  \emph{Input:} A signed graph $G=(V,E,s)$, an integer $k$.\\
  \emph{Parameter:} $k$.\\
  \emph{Output:} A set of at most $k$ vertices whose removal makes $G$ balanced or
  'NO' if no such set exists.
}}
\end{center}

We show that the MBD problem is FPT by transforming it into the Bipartization problem
defined as follows.

\begin{center}
\fbox{
\parbox{0.9\linewidth}{
  \noindent{\bfseries  The Bipartization problem}\\
  \emph{Input:} A graph $G$, an integer $k$\\
  \emph{Parameter:} $k$\\
  \emph{Output:} A set of at most $k$ vertices whose removal makes $G$ bipartite or
  'NO' if no such set exists.
}}
\end{center}

The transformation is described in the following theorem.

\begin{theorem}
The MBD problem is FPT and can be solved in time $O^*(3^k)$.
\end{theorem}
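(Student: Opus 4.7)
The plan is to reduce MBD to the Bipartization problem and then invoke the known $O^{*}(3^{k})$ fixed-parameter algorithm from \cite{RSV04,Huff2005}. Specifically, given a signed graph $G=(V,E,s)$, I would construct an unsigned graph $H$ by \emph{subdividing every positive edge} of $G$: for each positive edge $uv$ introduce a fresh ``Steiner'' vertex $x_{uv}$ and replace $uv$ with the two edges $ux_{uv}$ and $x_{uv}v$; every negative edge is kept as an ordinary unsigned edge. Clearly $|V(H)| \le |V(G)| + |E(G)|$, so the construction is polynomial.

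The central observation is a parity computation. Since every Steiner vertex has degree two in $H$, each cycle of $H$ corresponds to a unique cycle $C$ of $G$ (with, say, $p$ positive and $n$ negative edges), and its length in $H$ is $2p+n$, whose parity equals that of $n$. Combining this with the standard characterization of balanced signed graphs already used in the proof of the proposition (a signed graph is balanced iff every cycle contains an even number of negative edges), one gets: $G$ is balanced iff $H$ is bipartite. This immediately handles the unparameterized equivalence.

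The next step is to lift this to a parameterized equivalence between the two deletion problems, and I expect this to be the main obstacle. The forward direction is immediate: if a set $S \subseteq V(G)$ with $|S|\le k$ makes $G-S$ balanced, then the same set, viewed as a subset of $V(H)$, makes $H-S$ bipartite by the parity argument applied to each cycle of $G-S$. The reverse direction is the delicate one, because an optimal bipartization set $S' \subseteq V(H)$ may contain Steiner vertices with no counterpart in $G$. To handle this I would define $S \subseteq V(G)$ of size at most $|S'|$ by keeping every original vertex of $S'$ and, for each Steiner vertex $x_{uv} \in S'$, adding one of its endpoints (say $u$) to $S$. Then for any cycle $C$ of $G-S$, none of its vertices lies in $V(G)\cap S'$, and if some Steiner vertex $x_{uv}$ along a positive edge of $C$ were in $S'$ then by construction $u$ or $v$ would lie in $S$, contradicting $V(C)\cap S=\emptyset$. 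Hence the subdivision of $C$ is a cycle of $H-S'$ of length $2p+n$, which must be even by bipartiteness, forcing $n$ even; so $G-S$ is balanced.

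With the parameterized reduction in hand, I would feed the instance $(H,k)$ into the $O^{*}(3^{k})$ Bipartization algorithm of \cite{RSV04,Huff2005}; since the transformation runs in polynomial time and preserves the parameter $k$, this yields an $O^{*}(3^{k})$ algorithm for MBD, as claimed.
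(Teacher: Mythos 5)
Your proposal is correct and follows essentially the same route as the paper: subdivide the positive edges, use the characterization of balance via cycles with an odd number of negative edges, establish the parameter-preserving equivalence with Bipartization (handling Steiner vertices in the deletion set by replacing each with an endpoint), and invoke the $O^{*}(3^{k})$ algorithm of \cite{Huff2005}. Your treatment of the reverse direction is in fact slightly more careful than the paper's, since you argue directly that the replaced set $S$ is a balanced-deletion set for $G$ rather than asserting that the replacement preserves bipartiteness of the subdivided graph.
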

\begin{proof}
It is well-known (see, e.g., Theorem 2.8 in \cite{gulpinar}) that a signed graph is balanced if and only if it does not contain cycles
involving odd number of negative edges. Hence, the MBD problem in fact asks
for at most $k$ vertices whose removal breaks all cycles containing
an odd number of negative edges.

Let $G'$ be the (unsigned) graph obtained
from $G$ by \emph{subdividing} each positive edge. In
other words, for each positive edge $\{u,v\}$, we introduce a new vertex
$w$ and replace $\{u,v\}$ by $\{u,w\}$ and $\{w,v\}$. We claim that $G$ has
a set of at most $k$ vertices breaking all cycles with an odd number of negative
edges if and only if $G'$ can be made bipartite by removal of at most $k$ vertices.

Assume the former and let $K$ be a set of at most $k$ vertices whose removal
breaks all cycles with an odd number of negative edges. It follows that
$G' - K$ is bipartite. Indeed, each cycle $C'$ of $G' - K$
can be obtained from a cycle $C$ of $G - K$ by subdivision of its
positive edges. Hence, $C'$ can be of an odd length only if $C$ has an odd number
of negative edges which is impossible according to our assumption about $K$.

Conversely, let $K$ be a set of at most $k$ vertices such that $G' - K$
is bipartite. We may safely assume that $K$ does not contain the new vertices
subdividing positive edges: otherwise each such vertex can be replaced by one
of its neighbors. Thus, $K \subseteq V(G)$. Observe that $G - K$
does not have cycles with odd number of negative edges. Indeed, by subdividing
positive edges, any such cycle translates into an odd cycle of $G' - K$
in contradiction to our assumption about $K$.

It follows from the above argumentation that the MBD problem can be solved as
follows. Transform $G$ into $G'$ and run on $G'$ the $O^*(3^k)$ algorithm
solving the bipartization problem \cite{Huff2005}. If the algorithm returns 'NO'
then return 'NO'. Otherwise, replace each subdividing vertex by one of its
neighbors and return the resulting set of vertices. Clearly, the complexity
of the resulting algorithm is $O^*(3^k)$.  \end{proof}

\smallskip

{\bf Remarks.} The usual trick to avoid the subdivided vertices to be selected to the resulting solution would be to make $k+1$ copies of each vertex. However, such approach would increase the runtime of the resulting implementation and hence we have used a slightly more sophisticated method. Unfortunately, it is not known yet whether the Bipartization problem has a polynomial-size problem kernel \cite{Huff2005}. Thus, it is not known yet whether the MBD problem has a polynomial-size problem kernel. (If one was known, we could try to use it to speed up our fixed-parameter algorithm.)

Note that a version of the MBD problem, where edge-deletions rather than vertex-deletions are
used was considered in \cite{BHN077,DESZ06}.

\section{Experimental Evaluation}\label{secEE}

In this section we provide and discuss our experiment results for the heuristics SGA, SGA3, SGA80, SGA+VC descried in Section \ref{secSGA} and the exact algorithm given in Section \ref{secEA}.
Note that in our experiments we use a larger test bed and better scaling procedure than in \cite{GZ}.

\subsection{Scaling Procedure}

Recall that we consider an LP problem in the standard form stated as
\begin{center}
Minimize $\{p^{T}x; \mbox{ subject to } Ax=b,\; x\ge 0\}.$
\end{center}
In Section \ref{secENSG}, to simplify our notation we assumed that $A$ is a ($0,\pm 1$)-matrix. However, in general, in real LP problems  $A$ is not a ($0,\pm 1$)-matrix.
Therefore, in reality, the first phase in solving the DMERN problem is applying a scaling procedure whose aim
is to increase the number of $(0, \pm{}1)$-rows by scaling rows and columns. Here we describe a scaling procedure that we have used. Our computational experiments indicate that this scaling is often better than the scaling procedures we found in the literature. Let us describe our scaling procedure. Let $A=[a_{ij}]_{n \times m}.$

First we apply simple row scaling, i.e., scale all the rows which contain only zeros and $\pm{} x$, where $x > 0$ is some constant: for every $i \in \{ 1, 2, \ldots, n \}$ set $a_{ij} = a_{ij} / x$ for $j = 1, 2, \ldots, m$ if $a_{ij} \in \{0, -x, +x\}$ for every $j \in \{1, 2, \ldots, m\}$.

Then we apply a more sophisticated procedure.  Let $[r_i]_n$ be an array of boolean values, where $r_i$ indicates whether the $i$th row is a $(0, \pm{}1)$-row.  Let $[b_j]_m$ be an array of boolean values, where $b_j$ indicates whether the $j$th column is bounded, i.e., whether it has at least one nonzero value in a $(0, \pm{}1)$-row: for some $j \in \{ 1, 2, \ldots, m \}$ the value $b_j = true$ if and only if there exists some $i$ such that $r_i = true$ and $a_{ij} \neq 0$.

Next we do the following for every non $(0, \pm{} 1)$-row (note that at this stage any non $(0, \pm{}1)$-row contains at least two nonzero elements).  Let $J$ be the set of indices of bounded columns with nonzero elements in the current row $c$: $J = \{ j: a_{cj} \neq 0 \mbox{ and } b_j = true \}$.  If $J = \emptyset$, i.e., all the columns corresponding to nonzero elements in the current row are unbounded, then we simply scale every of these columns: $a_{ij} = a_{ij} / a_{cj}$ for every $i = 1, 2, \ldots, n$ and for every $j$ such that $a_{cj} \neq 0$.  If $J \neq \emptyset$ and $a_{cj} \in \{+x, -x\}$ for every $j \in J$, where $x$ is some constant, then we scale accordingly the current row ($a_{cj} = a_{cj} / x$ for every $j \in \{ 1, 2, \ldots, m \}$) and scale the unbounded columns: $a_{ij} = a_{ij} / a_{cj}$ for every $j \notin J$ if $a_{cj} \neq 0$.  Otherwise we do nothing for the current row.

Every time when we scale rows or columns we update the arrays $r$ and $b$.

Since the matrices processed by this heuristic are usually sparse, we use a special data structure to store them.  In particular, we store only nonzero elements providing the row and column indices for each of them.  We also store a list of references to the corresponding nonzero elements for every row and for every column of the matrix.

\subsection{Computational Experience}

The computational results for all heuristics apart from SGA+VC as well as for the exact algorithm are provided in Table~\ref{tab:experiments}.  As a test bed we use all the instances provided in Netlib (\url{http://netlib.org/lp/data/}). All algorithms were implemented in C++ and the evaluation platform is based on an AMD Athlon 64 X2 3.0~GHz processor. For the exact algorithm we used a code of H{\"u}ffner \url{http://theinf1.informatik.uni-jena.de/~hueffner/}. In SGA+VC we used a vertex cover code based on \cite{ALSS}.
Since the exact algorithm can potentially take a very long time, we introduced a timeout: if after one hour of work
the algorithm is unable to compute the output, it terminates.

Denote by $n$ the number of $(0,\pm 1)$-rows in the instance, i.e., the number of vertices in the corresponding signed graph $G$.  The column $k$ reports the difference between $n$ and the number of vertices in a maximum induced balanced subgraph of $G$ found by the exact algorithm. The rows where $k$ is not given correspond to the instances where the algorithm terminated after one hour without computing the output. The nine columns following column $k$ report the same differences found by heuristics SGA, SGA3, and SGA80\@.  The first three columns are related to SGA\@. The columns RS, BFS, and DFS are related to the way of computing the spanning tree (see Section 3 for the detailed explanation).  The next three columns are related to SGA3 and the last three columns are related to SGA80 with the analogous meaning of particular columns. The column $t$ reports the runtime taken by the exact algorithm.  The time provided in the column `$t_1$' is the average time required for SGA(RS), SGA(BFS) and SGA(DFS) \footnote{the acronym in the parenthesis correspond to the algorithm of computing the spanning tree.} to proceed once.  Our experiments show that for SGA3 this time is approximately 3 times larger and for SGA80 80 times larger.  It also appears that there is no significant difference between running times of SGA based on RS, BFS or DFS.

The 'Average' row shows the average value of the respective columns.
The `Avg. diff.'\ row shows how far, on average, is a particular modification of SGA from the optimal solutions.  `\# exact sol.'\ shows the number of instances for which a particular modification of SGA obtained an optimal solution.  Both `Avg. diff.'\ and `\# exact sol.'\ consider only the instances with the knows optimal solutions.

{\footnotesize
\begin{longtable}{
@{}
l @{~} c @{~} r
c *{3}{@{~} r}
c *{3}{@{~} r}
c *{3}{@{~} r}
c *{2}{@{~~} r}
@{}}
\\
\caption{Experiment results.}\\
\label{tab:experiments} \\

\toprule
 & ~ &  & ~ & \multicolumn{3}{c}{SGA} & ~ & \multicolumn{3}{c}{SGA3} & ~ & \multicolumn{3}{c}{SGA80} & ~ & \multicolumn{2}{c}{Time, s} \\
\cmidrule{5-7} \cmidrule{9-11} \cmidrule{13-15} \cmidrule{17-18}
Instance	&&	$k$	&&	RS	&	BFS	&	DFS	&&	RS	&	BFS	&	DFS	&&	RS	&	BFS	&	DFS	&&	$t$	&	$t_1$	\\
\cmidrule(){1-18}

\endfirsthead

\toprule
 & ~ &  & ~ & \multicolumn{3}{c}{SGA} & ~ & \multicolumn{3}{c}{SGA3} & ~ & \multicolumn{3}{c}{SGA80} & ~ & \multicolumn{2}{c}{Time, s} \\
\cmidrule{5-7} \cmidrule{9-11} \cmidrule{13-15} \cmidrule{17-18}
Instance	&&	$k$	&&	RS	&	BFS	&	DFS	&&	RS	&	BFS	&	DFS	&&	RS	&	BFS	&	DFS	&&	$t$	&	$t_1$	\\
\cmidrule(){1-18}

\endhead
\bottomrule
\endfoot

25FV47               	&&	15	&&	25	&	38	&	21	&&	25	&	38	&	18	&&	22	&	38	&	16	&&	4.40	&	0.004	\\
80BAU3B              	&&	 --- 	&&	42	&	40	&	40	&&	40	&	40	&	40	&&	40	&	40	&	40	&&	 $>1$h 	&	0.106	\\
ADLITTLE             	&&	1	&&	1	&	1	&	1	&&	1	&	1	&	1	&&	1	&	1	&	1	&&	0.02	&	0.000	\\
AFIRO                	&&	0	&&	0	&	0	&	0	&&	0	&	0	&	0	&&	0	&	0	&	0	&&	0.00	&	0.000	\\
AGG                  	&&	 --- 	&&	107	&	108	&	104	&&	106	&	108	&	104	&&	104	&	108	&	103	&&	 $>1$h 	&	0.001	\\
AGG2                 	&&	 --- 	&&	85	&	91	&	83	&&	85	&	91	&	83	&&	83	&	91	&	83	&&	 $>1$h 	&	0.001	\\
AGG3                 	&&	 --- 	&&	85	&	91	&	83	&&	85	&	91	&	83	&&	83	&	91	&	83	&&	 $>1$h 	&	0.001	\\
BANDM                	&&	23	&&	24	&	24	&	24	&&	23	&	24	&	23	&&	23	&	24	&	23	&&	1493.12	&	0.001	\\
BEACONFD             	&&	3	&&	3	&	3	&	3	&&	3	&	3	&	3	&&	3	&	3	&	3	&&	0.00	&	0.000	\\
BLEND                	&&	1	&&	1	&	1	&	1	&&	1	&	1	&	1	&&	1	&	1	&	1	&&	0.00	&	0.000	\\
BNL1                 	&&	14	&&	17	&	19	&	15	&&	17	&	18	&	14	&&	14	&	18	&	14	&&	1.83	&	0.003	\\
BNL2                 	&&	 --- 	&&	127	&	109	&	103	&&	110	&	107	&	101	&&	99	&	105	&	86	&&	 $>1$h 	&	0.061	\\
BOEING1              	&&	 --- 	&&	49	&	61	&	42	&&	49	&	61	&	42	&&	48	&	60	&	42	&&	 $>1$h 	&	0.001	\\
BOEING2              	&&	15	&&	17	&	17	&	16	&&	17	&	17	&	16	&&	15	&	17	&	15	&&	0.05	&	0.000	\\
BORE3D               	&&	12	&&	14	&	15	&	13	&&	14	&	15	&	12	&&	12	&	15	&	12	&&	0.14	&	0.001	\\
BRANDY               	&&	6	&&	7	&	8	&	6	&&	6	&	8	&	6	&&	6	&	8	&	6	&&	0.00	&	0.000	\\
CAPRI                	&&	 --- 	&&	40	&	43	&	37	&&	37	&	40	&	34	&&	34	&	40	&	33	&&	 $>1$h 	&	0.001	\\
CYCLE                	&&	 --- 	&&	34	&	36	&	34	&&	34	&	36	&	34	&&	34	&	36	&	34	&&	 $>1$h 	&	0.020	\\
CZPROB               	&&	1	&&	1	&	1	&	1	&&	1	&	1	&	1	&&	1	&	1	&	1	&&	0.27	&	0.018	\\
D2Q06C               	&&	 --- 	&&	67	&	97	&	67	&&	67	&	96	&	67	&&	67	&	94	&	65	&&	 $>1$h 	&	0.042	\\
D6CUBE               	&&	 --- 	&&	61	&	50	&	50	&&	52	&	50	&	50	&&	46	&	47	&	50	&&	 $>1$h 	&	0.003	\\
DEGEN2               	&&	 --- 	&&	234	&	237	&	219	&&	234	&	237	&	219	&&	226	&	234	&	218	&&	 $>1$h 	&	0.011	\\
DEGEN3               	&&	 --- 	&&	822	&	819	&	769	&&	822	&	819	&	769	&&	815	&	819	&	769	&&	 $>1$h 	&	0.197	\\
DFL001               	&&	 --- 	&&	2818	&	2997	&	2603	&&	2818	&	2956	&	2603	&&	2802	&	2903	&	2585	&&	 $>1$h 	&	1.871	\\
E226                 	&&	15	&&	18	&	19	&	15	&&	17	&	19	&	15	&&	16	&	19	&	15	&&	1.09	&	0.001	\\
ETAMACRO             	&&	12	&&	20	&	31	&	19	&&	20	&	31	&	19	&&	20	&	26	&	16	&&	0.47	&	0.001	\\
FFFFF800             	&&	 --- 	&&	50	&	69	&	46	&&	50	&	69	&	46	&&	41	&	65	&	39	&&	 $>1$h 	&	0.002	\\
FINNIS               	&&	 --- 	&&	121	&	127	&	120	&&	121	&	127	&	120	&&	119	&	127	&	119	&&	 $>1$h 	&	0.003	\\
FIT1D                	&&	6	&&	6	&	7	&	6	&&	6	&	7	&	6	&&	6	&	7	&	6	&&	0.00	&	0.000	\\
FIT1P                	&&	0	&&	0	&	0	&	0	&&	0	&	0	&	0	&&	0	&	0	&	0	&&	0.00	&	0.000	\\
FIT2D                	&&	6	&&	7	&	7	&	6	&&	6	&	7	&	6	&&	6	&	7	&	6	&&	0.00	&	0.004	\\
FIT2P                	&&	2	&&	2	&	2	&	2	&&	2	&	2	&	2	&&	2	&	2	&	2	&&	0.00	&	0.007	\\
FORPLAN              	&&	1	&&	1	&	1	&	1	&&	1	&	1	&	1	&&	1	&	1	&	1	&&	0.00	&	0.000	\\
GANGES               	&&	 --- 	&&	83	&	84	&	82	&&	83	&	84	&	82	&&	77	&	84	&	77	&&	 $>1$h 	&	0.021	\\
GFRD-PNC             	&&	 --- 	&&	68	&	68	&	95	&&	68	&	68	&	86	&&	68	&	68	&	80	&&	 $>1$h 	&	0.008	\\
GREENBEA             	&&	 --- 	&&	48	&	60	&	46	&&	48	&	60	&	46	&&	45	&	57	&	41	&&	 $>1$h 	&	0.043	\\
GREENBEB             	&&	 --- 	&&	48	&	60	&	46	&&	48	&	60	&	46	&&	45	&	57	&	41	&&	 $>1$h 	&	0.044	\\
GROW15               	&&	0	&&	0	&	0	&	0	&&	0	&	0	&	0	&&	0	&	0	&	0	&&	0.00	&	0.000	\\
GROW22               	&&	0	&&	0	&	0	&	0	&&	0	&	0	&	0	&&	0	&	0	&	0	&&	0.00	&	0.000	\\
GROW7                	&&	0	&&	0	&	0	&	0	&&	0	&	0	&	0	&&	0	&	0	&	0	&&	0.00	&	0.000	\\
ISRAEL               	&&	8	&&	9	&	10	&	8	&&	8	&	10	&	8	&&	8	&	10	&	8	&&	0.02	&	0.000	\\
KB2                  	&&	1	&&	1	&	3	&	2	&&	1	&	3	&	1	&&	1	&	3	&	1	&&	0.00	&	0.000	\\
LOTFI                	&&	18	&&	24	&	24	&	19	&&	22	&	20	&	19	&&	19	&	20	&	19	&&	11.23	&	0.001	\\
MAROS-R7             	&&	0	&&	0	&	0	&	0	&&	0	&	0	&	0	&&	0	&	0	&	0	&&	0.00	&	0.011	\\
MAROS                	&&	11	&&	17	&	14	&	21	&&	15	&	14	&	18	&&	12	&	14	&	11	&&	0.23	&	0.005	\\
MODSZK1              	&&	 --- 	&&	237	&	267	&	237	&&	237	&	267	&	237	&&	237	&	267	&	235	&&	 $>1$h 	&	0.004	\\
NESM                 	&&	10	&&	13	&	13	&	13	&&	11	&	13	&	11	&&	10	&	13	&	10	&&	0.03	&	0.003	\\
PEROLD               	&&	 --- 	&&	28	&	29	&	25	&&	26	&	29	&	24	&&	24	&	27	&	23	&&	 $>1$h 	&	0.003	\\
PILOT.JA             	&&	16	&&	18	&	19	&	18	&&	17	&	19	&	16	&&	16	&	19	&	16	&&	11.72	&	0.004	\\
PILOT                	&&	 --- 	&&	45	&	45	&	44	&&	42	&	44	&	41	&&	41	&	44	&	41	&&	 $>1$h 	&	0.008	\\
PILOT.WE             	&&	 --- 	&&	34	&	36	&	31	&&	30	&	36	&	30	&&	28	&	36	&	27	&&	 $>1$h 	&	0.003	\\
PILOT4               	&&	3	&&	3	&	3	&	3	&&	3	&	3	&	3	&&	3	&	3	&	3	&&	0.00	&	0.001	\\
PILOT87              	&&	 --- 	&&	77	&	87	&	74	&&	76	&	87	&	71	&&	70	&	87	&	69	&&	 $>1$h 	&	0.015	\\
PILOTNOV             	&&	19	&&	21	&	24	&	21	&&	21	&	24	&	20	&&	19	&	24	&	19	&&	201.29	&	0.005	\\
RECIPE               	&&	0	&&	0	&	0	&	0	&&	0	&	0	&	0	&&	0	&	0	&	0	&&	0.00	&	0.000	\\
SC105                	&&	16	&&	17	&	32	&	22	&&	17	&	31	&	21	&&	17	&	31	&	19	&&	12.56	&	0.000	\\
SC205                	&&	 --- 	&&	36	&	68	&	50	&&	36	&	67	&	47	&&	36	&	67	&	41	&&	 $>1$h 	&	0.000	\\
SC50A                	&&	8	&&	8	&	12	&	8	&&	8	&	11	&	8	&&	8	&	11	&	8	&&	0.02	&	0.000	\\
SC50B                	&&	6	&&	6	&	9	&	8	&&	6	&	9	&	6	&&	6	&	9	&	6	&&	0.02	&	0.000	\\
SCAGR25              	&&	0	&&	0	&	0	&	0	&&	0	&	0	&	0	&&	0	&	0	&	0	&&	0.03	&	0.002	\\
SCAGR7               	&&	0	&&	0	&	0	&	0	&&	0	&	0	&	0	&&	0	&	0	&	0	&&	0.02	&	0.000	\\
SCFXM1               	&&	12	&&	13	&	14	&	13	&&	12	&	14	&	13	&&	12	&	14	&	12	&&	0.30	&	0.001	\\
SCFXM2               	&&	 --- 	&&	26	&	28	&	26	&&	26	&	28	&	26	&&	24	&	28	&	26	&&	 $>1$h 	&	0.003	\\
SCFXM3               	&&	 --- 	&&	39	&	42	&	39	&&	38	&	42	&	39	&&	36	&	42	&	39	&&	 $>1$h 	&	0.006	\\
SCORPION             	&&	1	&&	1	&	1	&	1	&&	1	&	1	&	1	&&	1	&	1	&	1	&&	0.02	&	0.001	\\
SCRS8                	&&	9	&&	9	&	9	&	9	&&	9	&	9	&	9	&&	9	&	9	&	9	&&	0.06	&	0.002	\\
SCSD1                	&&	0	&&	0	&	0	&	0	&&	0	&	0	&	0	&&	0	&	0	&	0	&&	0.03	&	0.000	\\
SCSD6                	&&	0	&&	0	&	0	&	0	&&	0	&	0	&	0	&&	0	&	0	&	0	&&	0.14	&	0.001	\\
SCSD8                	&&	0	&&	0	&	0	&	0	&&	0	&	0	&	0	&&	0	&	0	&	0	&&	0.59	&	0.001	\\
SCTAP1               	&&	0	&&	0	&	0	&	0	&&	0	&	0	&	0	&&	0	&	0	&	0	&&	0.00	&	0.000	\\
SCTAP2               	&&	0	&&	0	&	0	&	0	&&	0	&	0	&	0	&&	0	&	0	&	0	&&	0.00	&	0.006	\\
SCTAP3               	&&	0	&&	0	&	0	&	0	&&	0	&	0	&	0	&&	0	&	0	&	0	&&	0.00	&	0.012	\\
SEBA                 	&&	 --- 	&&	274	&	285	&	267	&&	274	&	285	&	267	&&	269	&	285	&	265	&&	 $>1$h 	&	0.021	\\
SHARE1B              	&&	4	&&	5	&	4	&	4	&&	4	&	4	&	4	&&	4	&	4	&	4	&&	0.00	&	0.000	\\
SHARE2B              	&&	6	&&	6	&	6	&	6	&&	6	&	6	&	6	&&	6	&	6	&	6	&&	0.00	&	0.000	\\
SHELL                	&&	2	&&	2	&	2	&	2	&&	2	&	2	&	2	&&	2	&	2	&	2	&&	1.51	&	0.006	\\
SHIP04L              	&&	 --- 	&&	36	&	36	&	36	&&	36	&	36	&	36	&&	36	&	36	&	36	&&	 $>1$h 	&	0.006	\\
SHIP04S              	&&	 --- 	&&	36	&	36	&	36	&&	36	&	36	&	36	&&	36	&	36	&	36	&&	 $>1$h 	&	0.005	\\
SHIP08L              	&&	 --- 	&&	64	&	64	&	64	&&	64	&	64	&	64	&&	64	&	64	&	64	&&	 $>1$h 	&	0.023	\\
SHIP08S              	&&	 --- 	&&	64	&	64	&	64	&&	64	&	64	&	64	&&	64	&	64	&	64	&&	 $>1$h 	&	0.015	\\
SHIP12L              	&&	 --- 	&&	96	&	96	&	96	&&	96	&	96	&	96	&&	96	&	96	&	96	&&	 $>1$h 	&	0.046	\\
SHIP12S              	&&	 --- 	&&	96	&	96	&	96	&&	96	&	96	&	96	&&	96	&	96	&	96	&&	 $>1$h 	&	0.030	\\
SIERRA               	&&	 --- 	&&	400	&	455	&	420	&&	400	&	455	&	397	&&	387	&	444	&	388	&&	 $>1$h 	&	0.030	\\
STAIR                	&&	8	&&	11	&	12	&	9	&&	9	&	12	&	9	&&	8	&	12	&	8	&&	0.02	&	0.000	\\
STANDATA             	&&	 --- 	&&	53	&	57	&	59	&&	53	&	57	&	59	&&	53	&	57	&	59	&&	 $>1$h 	&	0.002	\\
STANDGUB             	&&	 --- 	&&	53	&	57	&	59	&&	53	&	57	&	59	&&	53	&	57	&	59	&&	 $>1$h 	&	0.002	\\
STANDMPS             	&&	 --- 	&&	54	&	58	&	73	&&	54	&	58	&	73	&&	54	&	58	&	70	&&	 $>1$h 	&	0.004	\\
STOCFOR1             	&&	0	&&	0	&	0	&	0	&&	0	&	0	&	0	&&	0	&	0	&	0	&&	0.00	&	0.000	\\
STOCFOR2             	&&	 --- 	&&	258	&	311	&	202	&&	258	&	307	&	202	&&	243	&	305	&	197	&&	 $>1$h 	&	0.050	\\
TUFF                 	&&	16	&&	26	&	16	&	18	&&	17	&	16	&	16	&&	16	&	16	&	16	&&	0.58	&	0.001	\\
VTP.BASE             	&&	4	&&	6	&	6	&	6	&&	4	&	6	&	4	&&	4	&	6	&	4	&&	0.00	&	0.000	\\
WOOD1P               	&&	0	&&	0	&	0	&	0	&&	0	&	0	&	0	&&	0	&	0	&	0	&&	0.02	&	0.001	\\
WOODW                	&&	0	&&	0	&	0	&	0	&&	0	&	0	&	0	&&	0	&	0	&	0	&&	0.00	&	0.007	\\

\cmidrule(){1-18}

Average	&&		&&	79.3	&	84.85	&	75.57	&&	78.55	&	84.2	&	74.82	&&	76.91	&	83.19	&	73.54	&&	32.26	&	0.030	\\
Avg. diff.	&&		&&	1.28	&	2.15	&	0.93	&&	0.78	&	2.02	&	0.52	&&	0.35	&	1.93	&	0.17	&&		&		\\
\# exact sol. && && 33 & 31 & 36 && 40 & 31 & 44 && 48 & 31 & 50 \\
\end{longtable}
}

Observe that the exact algorithm completed its computations for 54 instances out of the total of 93, and for 52 instances the running time was at most 1 minute.  Note that SGA(DFS) achieved the optimal solution in 36 out of 54 cases, SGA3(DFS) in 44 cases and SGA80(DFS) in 50 cases.

The DFS-based version of SGA clearly outperforms the RS- and BFS-based versions.  This is also true for SGA3 and SGA80.  We are not able to justify this result but it is strongly confirmed by our extensive experimentation.

\bigskip

The results with SGA(DFS)+VC are not provided in Table~\ref{tab:experiments} since SGA(DFS)+VC managed to improve SGA(DFS) only for 4 instances:

\begin{center}
\begin{tabular}{lrrrr}
\toprule
Instance & $k$ & $k_\text{SGA(DFS)+VC}$ & $k_\text{SGA(DFS)}$ & $k_\text{SGA80(DFS)}$ \\
\cmidrule{1-5}
BOEING2 & 15 & 15 & 16 & 15 \\
DEGEN2 & --- & 218 & 219 & 218 \\
DEGEN3 & --- & 764 & 769 & 769 \\
DFL001 & --- & 2599 & 2603 & 2585 \\
\bottomrule
\end{tabular}
\end{center}

SGA(DFS)+VC was not able to proceed in 30 minutes for AGG, AGG2, AGG3, FINNIS, MODSZK1 and SIERRA\@.  Note that $k_\text{SGA(DFS)+VC} < k_\text{SGA80(DFS)}$ for only one instance while $k_\text{SGA(DFS)+VC} > k_\text{SGA80(DFS)}$ for 39 instances and recall that for 6 instances SGA(DFS)+VC did not terminate in the given time.  Since the running time of SGA(DFS)+VC usually exceeds that of SGA80(DFS) and the quality of SGA(DFS)+VC is not much different even from that of SGA(DFS), SGA+VC appears to be of little practical interest.  However, SGA+VC demonstrates that there is no need to replace Step 4 of SGA by a more powerful heuristic or exact algorithm and that in order to improve results one should pay attention to the Step 2 of SGA.  In other words, this negative result pointed out the proper area for further research.  Indeed, replacement of RS algorithm for the spanning tree building with DFS dramatically improved the heuristic quality.

\bigskip

Although SGA80 is slower than SGA and SGA3, it is much more precise and its running time is still reasonable.  It follows that SGA80(DFS) is the best choice with respect to the tradeoff between running time and solution quality.  Observe that in almost all (50 out of 54) the cases feasible for the exact algorithm, SGA80(DFS), being much faster than the exact algorithm, managed to compute an optimal solution!  Note that this conclusion can be made only \emph{given the knowledge} about the optimal solution and without the considered fixed-parameter algorithm such knowledge would be very hard to obtain (for example, the instance PILOTNOV with $n = 329$ and $k_{\min} = 19$ would hardly be feasible to a brute-force exploration of all ${329 \choose 19}$ possibilities).



\section{Conclusions}\label{sec:C}
One of contributions of this paper is demonstrating a novel way of use of fixed-parameter algorithms where they do not substitute heuristic methods but are used to evaluate them.  As a case study, we considered heuristics for the problem of extracting a maximum-size reflected network in an LP problem.  The use of fixed-parameter algorithms helped us to investigate a state-of-the-art heuristic, improve it and allowed us to arrive at an interesting observation that the improved version of the considered heuristic almost always returns an optimal solution.

We believe that this way of applying fixed-parameter algorithms can be useful for other problems as well.  One candidate might be the problem of finding whether the given CNF formula has at most $k$ variables so that their removal makes the resulting formula Renameable Horn.  This is called the Renameable Horn deletion backdoor problem and was recently shown FPT \cite{ROJCSS}.  Heuristics for this problem are widely used in modern SAT solvers for identifying a small subset of variables on which an exponential-time branching
is to be performed \cite{KottBdoor}.  Currently it is unclear whether substituting a heuristic approach by the exact fixed-parameter algorithm would result in a better SAT solver.  But even if it is not the case, the exact algorithm can be still of a considerable use for ranking the heuristic techniques, especially as producing small Renameable Horn backdoors is vitally important for reducing the exponential-time impact on the runtime of SAT solvers.


\bigskip{}

\noindent
\textbf{Acknowledgements} We are grateful to Michael Langston and his group for providing us with a vertex cover code.  The work of G. Gutin  was supported in part by a grant from EPSRC.  The work of I. Razgon was supported by Science Foundation Ireland grant 05/IN/I886.

\bibliographystyle{plain}
\bibliography{sgJ}

\end{document}